\numberwithin{equation}{section}
\begin{document}
	\newtheorem{definition}{Definition}
	\newtheorem{theorem}{Theorem}
	\newtheorem{note}{Note}
	\newtheorem{example}{Example}
	\newtheorem{corollary}{Corollary}
	\newtheorem{lemma}{Lemma}
	\newtheorem{proposition}{Proposition}
	\newenvironment{proof}{{\bf Proof:\ \ }}{\qed}
	\newcommand{\qed}{\rule{0.5em}{1.5ex}}
	\newcommand{\bfg}[1]{\mbox{\boldmath $#1$\unboldmath}}

	\title{Statistical Inference for distributions with one Poisson conditional}         % Enter your title between curly braces
	\author{Barry C. Arnold \footnote{Email: barnold@ucr.edu, University of California,  Riverside, CA 92521}    and B.G. Manjunath \footnote{Email: bgmanjunath@gmail.com, School of Mathematics and Statistics, University of Hyderabad, Hyderabad, India} }        % Enter your name between curly braces
	\date{ }
	\maketitle
	\begin{abstract}
		It will be recalled that the classical bivariate normal distributions have normal marginals and normal conditionals. 
		It is natural to ask whether a similar phenomenon can be encountered involving Poisson marginals and conditionals. Reference to Arnold, Castillo and Sarabia's (1999) book on conditionally specified models will confirm that Poisson marginals will be encountered, together with both conditionals  being of the Poisson form, only in the case in which the variables are independent. Instead, in the present article we will be focusing on bivariate distributions with one marginal and the other family of conditionals being of the Poisson form. Such distributions are called Pseudo-Poisson distributions. We discuss distributional features of such models, explore inferential aspects and  include an examples of applications  of the Pseudo-Poisson model  to sets of over-dispersed data.	
		\vspace{0.5cm}
	\end{abstract}
	
	Keywords:  marginal and conditional distributions, Pseudo-Poisson, moment estimators, maximum likelihood estimators, likelihood ratio test, index of dispersion

	\section{Triangular transformation models}   
	
	We begin by reviewing a family of models called triangular transformation models which were introduced in Filus, Filus and Arnold \cite{ffa09} as follows:
	
	Let $\mathscr{F}=\{F(x;\boldsymbol{\theta}):\boldsymbol{\theta}=(\theta_1,...,\theta_m)^T \in \Theta \subset \mathscr{R}^{ m}\}$ be an $m$-parameter family of univariate distributions.  A $k$-dimensional Pseudo-$\mathscr{F}$ distribution can be constructed as follows
	%eq 5.1
	$$
	P(X_{1} \leq x_{1}) = F(x_{1};\boldsymbol{\theta}_{1}) \eqno(1.1)
	$$
	and for $\ell = 2,3,\ldots,k$
	%eq 5.2
	$$
	P(X_{\ell} \leq x_{\ell}|\boldsymbol{X}_{(\ell-1)} = \boldsymbol{x}_{(\ell-1)}) = F(x_{\ell};\boldsymbol{\theta}_{\ell}(\boldsymbol{x}_{(\ell-1)})) \eqno(1.2)
	$$
	where $\boldsymbol{\theta}_{1} \in \Theta$ and, for each $\ell$, \ $\boldsymbol{\theta}_{\ell}:\mathscr{R}^{ \ell-1} \rightarrow \Theta$. Note that we use the notational convention $\boldsymbol{a}_{(j)}=(a_1,a_2,...,a_j)$.
	
	\section{$k$-dimensional Pseudo-Poisson models}
	
	Within the general triangular transformation class of $k$-dimensional models can be found the class of $k$-dimensional Pseudo-Poisson models, which are more simply described in terms of discrete mass functions rather than distribution functions.
	
	Using standard notation we will write $X \sim \mathscr{P}(\lambda)$ (Poisson distribution) if, for $x\in \{0,1,2,...\}$ we have
	$$P(X=x)=\frac{e^{-\lambda} \lambda^x}{x!}.$$
	
	\begin{definition}
		A $k$-dimensional random variable $\underline{X}=(X_1,X_2,...,X_k)$ is said to have a $k$-dimensional Pseudo-Poisson distribution if there exists a positive constant $\lambda_1$ such that
		$$X_1\sim \mathscr{P}(\lambda_1)$$
		and $k-1$ functions $ \{\lambda_{\ell}:\ell=2,3,...,k\}$ where, for each $\ell$, $\lambda_{\ell}: \{0,1,2,...\}^{(\ell -1)} \rightarrow (0,\infty)$ such that
		$$X_{\ell}|\boldsymbol{X}_{(\ell-1)} = \boldsymbol{x}_{(\ell-1)} \sim \mathscr{P}(\lambda_{\ell}(\boldsymbol{x}_{(\ell-1)})).$$
	\end{definition}
	
	Note that there are no constraints on the forms of the functions $\lambda_{\ell}, \ \ \ell=2,3,...,k$ that appear in the definition, save for measurability.
	
	\bigskip
	
	In applications, it would typically be the case that the $\lambda_{\ell}$'s would be chosen to be relatively simple functions depending on a limited number of parameters.
	
	\section{Bivariate-Pseudo-Poisson models}  
	We will consider in some detail the Pseudo-Poisson models in the case in which the dimension $k$ is $2$. In that case we can describe the joint distribution as follows.
	\begin{definition}
		A $2$-dimensional random variable $\boldsymbol{X}=(X_1,X_2)$ is said to have a bivariate Pseudo-Poisson distribution if there exists a positive constant $\lambda_1$ such that
		$$X_1\sim \mathscr{P}(\lambda_1)$$
		and a functions $ \lambda_2: \{0,1,2,...\} \rightarrow (0, \infty)$ such that
		$$X_2|X_1 = x_1 \sim \mathscr{P}(\lambda_2(x_1)).$$
	\end{definition}
	
	In this case also, it might be desirable to restrict the form of the function $\lambda_2(x_1)$. For example we might restrict it to be a polynomial with unknown coefficients.
	
	\section{Some related models}  
	
	It will be recalled that the classical bivariate normal distributions have normal marginals AND normal conditionals. Thus it has $X_1\sim \mathscr{N}$ (normal distribution), $X_2 \sim \mathscr{N}$ and for each $x_1 \in \mathscr{R}$,  $X_2|X_1=x_1 \sim\mathscr{N}$, while for each $x_2\in \mathscr{R}$,  $X_1|X_2=x_2 \sim \mathscr{N}$
	also. It is natural to ask whether a similar phenomenon can be encountered with Poisson marginals and conditionals. A trivial example of this kind is one in which $X_1$ and $X_2$ are independent Poisson variables. In fact, no other examples exist. In an early paper, Seshadri and Patil \cite{sp64} argued that one could not have a non-trivial distribution with $X_1$ having a Poisson distribution and, for each $x_2$, $X_1|X_2=x_2$ also Poisson distributed. Arnold, Castillo and Sarabia \cite{acb99} (for example) consider the case in which, 
	for each $x_1 $,  $X_2|X_1=x_1 \sim\mathscr{P}$, while for each $x_2$,  $X_1|X_2=x_2 \sim \mathscr{P}$. Such distributions are called Poisson-conditionals distributions. They only have Poisson marginals in the case of independence (which could be deduced using the Seshadri-Patil result). However, if we are satisfied with having one marginal($X_1$) and the ``other" family of conditionals ($X_2|X_1=x_1$) being of the Poisson form, then the Pseudo-Poisson distributions fill the bill precisely.
	
	\section{The bivariate Pseudo-Poisson model with a linear regression function }  
	
	In this section we will consider in some detail a particularly simple bivariate Pseudo-Poisson model. For it we assume that
	
	\begin{equation} \label{pseudo-P-lin-regress}
	X_1 \sim \mathscr{P}(\lambda_1) 
	\end{equation}
	and 
	\begin{equation}
	X_2|X_1=x_1 \sim \mathscr{P}(\lambda_2+\lambda_3 x_1).
	\end{equation}
	
	The natural parameter space for this model is $\{(\lambda_1,\lambda_2,\lambda_3):\lambda_1>0,\lambda_2 > 0,\lambda_3\geq 0\}$. The case in which the variables are independent, corresponds to choice $\lambda_3=0$. Note that in the limiting case,  $\lambda_2= 0$ is a plausible value for the above model.  However, in such a framework $\lambda_3 > 0$, i.e., independence of variables is forsaken. Subsequently, when $\lambda_3=0$ then $\lambda_2$ is necessarily takes value greater than zero. With this framework the plausible parameter space for the model is $\{(\lambda_1,\lambda_2,\lambda_3):\lambda_1>0,\lambda_2 \geq 0,\lambda_3\geq 0\}$.
	
	\vspace{0.5cm}
	In the following, we derive the joint probability generating function (p.g.f.) and marginal p.g.f. of the above bivariate Pseudo-Poisson distribution. 
	\bigskip
	\begin{theorem} \label{theorem1}
		The p.g.f. for the bivariate Pseudo-Poisson distribution is given by 
		\begin{eqnarray}
		G(t_1,t_2) =  e^{\lambda_2 (t_2-1)} e^{ \lambda_1[t_1  e^{\lambda_3(t_2-1)}-1]};  \mbox{      }  t_1,t_2 \in \mathscr{R}. 
		\end{eqnarray}
	\end{theorem}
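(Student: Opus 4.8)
The plan is to compute the joint p.g.f. directly from its definition $G(t_1,t_2)=E\!\left[t_1^{X_1}t_2^{X_2}\right]$ by exploiting the hierarchical structure of the model. Since the distribution is specified through a marginal for $X_1$ and a conditional for $X_2$ given $X_1$, the natural device is the tower property of conditional expectation: I would write
$$
G(t_1,t_2)=E\!\left[t_1^{X_1}\,E\!\left[t_2^{X_2}\mid X_1\right]\right],
$$
so that the inner expectation is taken with respect to the Poisson conditional law and the outer expectation with respect to the Poisson marginal.

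For the inner step I would invoke the standard fact that a $\mathscr{P}(\mu)$ variable has p.g.f. $E[t^Y]=e^{\mu(t-1)}$. Applying this with $\mu=\lambda_2+\lambda_3 x_1$ on the event $\{X_1=x_1\}$ gives
$$
E\!\left[t_2^{X_2}\mid X_1=x_1\right]=e^{(\lambda_2+\lambda_3 x_1)(t_2-1)}.
$$
Substituting back yields
$$
G(t_1,t_2)=E\!\left[t_1^{X_1}\,e^{(\lambda_2+\lambda_3 X_1)(t_2-1)}\right]
=e^{\lambda_2(t_2-1)}\,E\!\left[\left(t_1\,e^{\lambda_3(t_2-1)}\right)^{X_1}\right],
$$
where the constant factor $e^{\lambda_2(t_2-1)}$ is pulled out of the expectation and the term $e^{\lambda_3 X_1(t_2-1)}$ is absorbed into the base of the power $t_1^{X_1}$.

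The remaining expectation is recognized as the p.g.f. of $X_1\sim\mathscr{P}(\lambda_1)$ evaluated at the modified argument $s=t_1\,e^{\lambda_3(t_2-1)}$; applying the Poisson p.g.f. formula once more gives $e^{\lambda_1(s-1)}=e^{\lambda_1[t_1 e^{\lambda_3(t_2-1)}-1]}$, and combining the two factors produces the claimed expression. I do not anticipate any real obstacle here: the argument is a short iterated-expectation computation, and the only point requiring mild care is the algebraic bookkeeping in the second step, namely correctly splitting $e^{(\lambda_2+\lambda_3 X_1)(t_2-1)}$ into a deterministic factor times a random factor of the form $\bigl(e^{\lambda_3(t_2-1)}\bigr)^{X_1}$ so that the outer expectation collapses cleanly into the marginal p.g.f. of $X_1$.
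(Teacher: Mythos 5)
Your proposal is correct and is essentially the paper's own proof in abstracted form: the paper writes out the double sum over $x_1,x_2$ of the joint mass function, sums over $x_2$ first (which is exactly your inner conditional Poisson p.g.f.\ evaluation), and then recognizes the remaining sum over $x_1$ as an exponential series (exactly your outer Poisson p.g.f.\ at the modified argument $t_1 e^{\lambda_3(t_2-1)}$). The only difference is that you invoke the Poisson p.g.f.\ formula twice via the tower property where the paper expands both series explicitly; the mathematical content and order of summation are identical.
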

	\begin{proof}
		Given that $(X_1,X_2)$ has the following bivariate mass function 
		\begin{eqnarray}
		P(X_1=x_1,X_2=x_2) = \frac{e^{-\lambda_1}\lambda_1^{x_1}}{x_1!} \frac{e^{-(\lambda_2 + \lambda_3 x_1)} (\lambda_2 + \lambda_3 x_1)^{x_2}}{x_2!},
		\end{eqnarray}
		where $x_1,x_2 \in \{0,1,2,...\}$ and $\lambda_1>0$, $\lambda_2  \geq 0$ and $\lambda_3 \geq 0$, the bivariate p.g.f is given by 
		\begin{eqnarray*}
			G(t_1,t_2) &=& E \Big(t_1^{X_1} t_2^{X_2} \Big) \\
			&=& \sum_{x_1=0}^{\infty} \sum_{x_2=0}^{\infty} t_1^{x_1} t_2^{x_2} \frac{e^{-\lambda_1}\lambda_1^{x_1}}{x_1!} \frac{e^{-(\lambda_2 + \lambda_3 x_1)} (\lambda_2 + \lambda_3 x_1)^{x_2}}{x_2!} \\
			&=& \sum_{x_1=0}^{\infty} \Big \{ t_1^{x_1} \frac{e^{-\lambda_1}\lambda_1^{x_1}}{x_1!} \sum_{x_2=0}^{\infty} t_2^{x_2} \frac{e^{-(\lambda_2 + \lambda_3 x_1)} (\lambda_2 + \lambda_3 x_1)^{x_2}}{x_2!} \Big \} \\
			&=&  \sum_{x_1=0}^{\infty} \Big \{ t_1^{x_1} \frac{e^{-\lambda_1}\lambda_1^{x_1}}{x_1!} 
			e^{(\lambda_2 + \lambda_3 x_1)(t_2-1) } \Big \} \\
			&=& e^{-\lambda_1} e^{\lambda_2 (t_2-1)}  \sum_{x_1=0}^{\infty} \frac{\Big[t_1 \lambda_1 e^{\lambda_3(t_2-1)} \Big]^{x_1}}{x_1!} \\
			G(t_1,t_2) &=&  e^{\lambda_2 (t_2-1)} e^{ \lambda_1[t_1  e^{\lambda_3(t_2-1)}-1]}.
		\end{eqnarray*}
	\end{proof}
	
	\begin{corollary}
		The marginal mass function of $X_2$ is that of a Neyman Type A distribution, when $\lambda_2=0$.
	\end{corollary}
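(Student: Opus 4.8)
The plan is to read off the marginal probability generating function of $X_2$ directly from Theorem \ref{theorem1} and then recognize the resulting expression as the p.g.f.\ of a Neyman Type A distribution. Since the marginal p.g.f.\ of $X_2$ is obtained from the joint p.g.f.\ simply by setting $t_1=1$, I would first write
$$
G_{X_2}(t_2) = G(1,t_2) = e^{\lambda_2(t_2-1)}\, e^{\lambda_1\left[e^{\lambda_3(t_2-1)}-1\right]}.
$$
Imposing $\lambda_2=0$ annihilates the first factor, leaving
$$
G_{X_2}(t_2) = e^{\lambda_1\left[e^{\lambda_3(t_2-1)}-1\right]}.
$$

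The key step is then to compare this with the canonical p.g.f.\ of the Neyman Type A law. Recall that the Neyman Type A distribution arises as a Poisson mixture of Poisson counts: if $N\sim\mathscr{P}(\mu)$ and, conditionally on $N=n$, the total is a sum of $n$ independent $\mathscr{P}(\phi)$ variables, then the resulting count has p.g.f.\ $\exp\{\mu(e^{\phi(t-1)}-1)\}$. Matching parameters gives $\mu=\lambda_1$ and $\phi=\lambda_3$, so the displayed expression is exactly the Neyman Type A p.g.f.\ with these parameters. Because a probability generating function determines a distribution on the non-negative integers uniquely, this identification completes the argument.

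The only real obstacle is recalling (or citing) the standard p.g.f.\ form of the Neyman Type A distribution; once that form is in hand the matching is immediate and requires no further computation. It is worth noting in passing why the identification is natural in light of the model itself: when $\lambda_2=0$ one has $X_2\mid X_1=x_1\sim\mathscr{P}(\lambda_3 x_1)$, so $X_2$ is a sum of $X_1$ independent Poisson contributions with $X_1$ itself Poisson distributed, which is precisely the compounding mechanism that generates the Neyman Type A law.
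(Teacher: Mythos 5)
Your proof is correct and follows essentially the same route as the paper: evaluate the joint p.g.f.\ at $t_1=1$, set $\lambda_2=0$, and recognize the resulting expression $e^{\lambda_1[e^{\lambda_3(t_2-1)}-1]}$ as the Neyman Type A p.g.f.\ (the paper makes this identification by citing Johnson, Kemp and Kotz, just as you invoke the canonical compound-Poisson form). Your added remarks---that a p.g.f.\ determines the distribution uniquely, and that the compounding mechanism $X_2=\sum_{i=1}^{X_1}\mathscr{P}(\lambda_3)$ explains the identification---are accurate and, if anything, slightly more explicit than the paper's presentation.
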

	
	\begin{proof}
		From Theorem \ref{theorem1} the marginal p.g.f of $X_2$ is given by 	
		\begin{eqnarray*}
			G(1,t_2) = G_{X_2}(t_2) 
			= e^{\lambda_2 (t_2-1)} e^{\lambda_1 [ e^{\lambda_3(t_2-1)}-1 ]}; \mbox{     } t_2 \in \mathscr{R} .
		\end{eqnarray*}
		When $\lambda_2=0$ the above p.g.f is given by 
		\begin{eqnarray}
		G_{X_2}(t_2)= M(t_2) = e^{\lambda_1 \Big[ e^{\lambda_3(t_2-1)}-1\Big]}.
		\end{eqnarray} 
		
	\end{proof} 
	
	\vspace{0.5cm}
	The above p.g.f. is of the form of a Neyman Type A distribution with $\lambda_3$ being the index of  clumping (see page 403 of Johnson, Kemp and Kotz \cite{jkk05}) which is also known as a Poisson mixture of Poissons distribution.

	\vspace{0.5cm}
	Now, the marginal mass function of $X_2$ is given by
	\begin{eqnarray}
	p_2(x_2)=P(X_2=x_2) = \frac{e^{-\lambda_1}\lambda_3^{x_2}}{x_2!} \sum_{j=0}^{\infty} \frac{(\lambda_1 e^{-\lambda_3})^j j^{x_2}}{j!};  \mbox{              } x_2=0,1,2,... .
	\end{eqnarray}
	i.e. $X_2$ has a Poisson distribution with the parameter $\lambda_3 \phi$ while $\phi$ itself is a random variable with the Poisson distribution with the parameter $\lambda_1$.
	
	\vspace{0.5cm}
	In the following section, we derive moments of the bivariate Pseudo-Poisson distribution. 
	
	\subsection{Moments}
	
	Now note that 
	
	\begin{eqnarray}
	E(X_1) &=& \lambda_1 \\
	E(X_2|X_1 =x_1)&=& \lambda_2 + \lambda_3 x_1 \nonumber \\	
	E\{ E(X_2|X_1 ) \} &=& \lambda_2 + \lambda_3 E(X_1) \nonumber \\
	E(X_2) &=& \lambda_2 + \lambda_3 \lambda_1,
	\end{eqnarray}
	and
	\begin{eqnarray}
	Var(X_1) &=& \lambda_1 \\
	Var(X_2|X_1 =x_1)&=& \lambda_2 + \lambda_3 x_1  \nonumber	\\
	Var(X_2) &=& E\{Var(X_2|X_1)\} + Var\{E(X_2|X_1)\}  \nonumber \\
	&=& E(\lambda_2 + \lambda_3 X_1) + Var(\lambda_2 + \lambda_3 X_1) \nonumber \\
	Var(X_2) &=& \lambda_2 + \lambda_3 \lambda_1 + \lambda^2_3 \lambda_1.
	\end{eqnarray}
	Also 
	\begin{eqnarray*}
		E(X_1 X_2) &=& E\{E(X_1 X_2|X_1)\} \\
		&=& E\{X_1E(X_2|X_1)\} \\
		&=& E(\lambda_2 X_1 + \lambda_3 X^2_1) \\
		E(X_1 X_2)	&=& \lambda_2 \lambda_1 + \lambda_3(\lambda_1 + \lambda^2_1).
	\end{eqnarray*}
	The covariance between $X_1$ and $X_2$ is thus
	\begin{eqnarray}
	Cov(X_1,X_2) &=& E(X_1X_2)- E(X_1)E(X_2) \nonumber \\
	&=& \lambda_2 \lambda_1 + \lambda_3(\lambda_1 + \lambda^2_1)  
	-  \lambda_1 ( \lambda_2 + \lambda_1 \lambda_3)	\nonumber \\
	Cov(X_1,X_2)	&=& \lambda_1 \lambda_3,
	\end{eqnarray}
	and the corresponding correlation is
	\begin{eqnarray}
	\rho &=& \frac{Cov(X_1,X_2)}{\sqrt{Var(X_1) Var(X_2)}} \nonumber \\
	&=& \frac{\lambda_1 \lambda_3}{\sqrt{\lambda_1 (\lambda_2 + \lambda_3 \lambda_1 + \lambda^2_3 \lambda_1) }}. 
	\end{eqnarray}
	
	\vspace{0.5cm}
	In the following we note three special cases which merit consideration.
	\begin{description}
		\item[Case I:]  When $\lambda_3=0$, it follows that  $\rho=0$. In fact, in this case, $X_1$ and $X_2$ are independent.
		\item[Case II:] When $\lambda_2=\lambda_3$, the correlation is of the form
		\begin{equation}
		\rho= \sqrt{\frac{\lambda_1\lambda_3}{1+\lambda_1+\lambda_1\lambda_3}}.
		\end{equation}
		\item[Case III:] In the limiting case in which $\lambda_2= 0$, $\rho$ simplifies to become
		\begin{eqnarray}
		\rho = \sqrt{\frac{\lambda_3}{1+\lambda_3}}.	
		\end{eqnarray}
		Note that this correlation only depends on $\lambda_3$, and not on $\lambda_1$, when $\lambda_2=0$. 
	\end{description}
	
	\vspace{0.5cm}
	
	We also remark that, for the case $\lambda_2=0$ , the bivariate Pseudo-Poisson distribution reduces to the bivariate Poisson-Poisson distribution.  The Poisson-Poisson distribution was originally introduced by Leiter and Hamdan \cite{lh73} in 1973 in analyzing traffic accidents and fatalities data. However, the two approaches leading to the same distribution are different.  Nevertheless, one can consider the bivariate Pseudo-Poisson as a generalization of the Poisson-Poisson distribution.

	\vspace{0.5cm}
	In the following, we state and prove a characterization of the bivariate Pseudo-Poisson distribution submodel (i.e. with $\lambda_2=0$) or the Poisson-Poisson distribution. A similar characterization for Power Series distributions is in Kyriakoussis and Papageorgiou \cite{kp89}.

	\subsection{Characterizing the Poisson-Poisson distribution}
	
	\begin{theorem}
		If $E(X_2|X_1=x_1)= \lambda_3 x_1$ and the marginal distribution of $X_2$ is of the form in $(5.6)$ (Neyman Type A Distribution) then the joint distribution of $(X_1,X_2)$ is necessarily that of the bivariate Poisson-Poisson distribution.
	\end{theorem}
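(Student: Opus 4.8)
The plan is to work entirely at the level of probability generating functions, exploiting the fact that the defining feature of the models under study is that the conditional law of $X_2$ given $X_1$ is of Poisson form. First I would use the regression hypothesis to pin down the conditional distribution completely: since $X_2 \mid X_1 = x_1$ is Poisson and a Poisson law is determined by its mean, the assumption $E(X_2 \mid X_1 = x_1) = \lambda_3 x_1$ forces $X_2 \mid X_1 = x_1 \sim \mathscr{P}(\lambda_3 x_1)$, whose conditional p.g.f. is $E(t_2^{X_2} \mid X_1 = x_1) = e^{\lambda_3 x_1 (t_2 - 1)}$. Thus the first hypothesis does all the work of identifying the conditional component of the joint law.

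The second hypothesis will be used to identify the marginal law of $X_1$, which I treat as a priori unknown with p.g.f. $G_{X_1}(s) = E(s^{X_1})$. Conditioning on $X_1$ and substituting the conditional p.g.f. found above gives
$$G_{X_2}(t_2) = E\big[E(t_2^{X_2}\mid X_1)\big] = E\big[\big(e^{\lambda_3(t_2-1)}\big)^{X_1}\big] = G_{X_1}\!\big(e^{\lambda_3(t_2-1)}\big).$$
Equating this with the hypothesized Neyman Type A form (5.6) yields $G_{X_1}(e^{\lambda_3(t_2-1)}) = e^{\lambda_1[e^{\lambda_3(t_2-1)}-1]}$, and the substitution $s = e^{\lambda_3(t_2-1)}$ — whose range is all of $(0,\infty)$ as $t_2$ varies over $\mathscr{R}$, since $\lambda_3 > 0$ — collapses this to $G_{X_1}(s) = e^{\lambda_1(s-1)}$. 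This is precisely the Poisson$(\lambda_1)$ p.g.f., so $X_1 \sim \mathscr{P}(\lambda_1)$. Combining with the conditional law from the first step, $(X_1,X_2)$ has $X_1 \sim \mathscr{P}(\lambda_1)$ and $X_2\mid X_1 = x_1 \sim \mathscr{P}(\lambda_3 x_1)$, which is exactly the bivariate Poisson-Poisson distribution.

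The step I expect to be the main (and essentially the only) obstacle is the inversion: one must be sure that matching the two generating functions along the image of the substitution genuinely determines the law of $X_1$, and not merely its values on a subinterval. This is handled by observing that $t_2 \mapsto e^{\lambda_3(t_2-1)}$ sweeps out all of $(0,\infty)$, so the identity $G_{X_1}(s) = e^{\lambda_1(s-1)}$ holds in particular on an interval contained in $(0,1)$; since two convergent power series agreeing on such a set must have identical coefficients, the probabilities $P(X_1 = x_1)$ are forced to equal the Poisson$(\lambda_1)$ masses. Everything preceding this is a routine conditioning computation.
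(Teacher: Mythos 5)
Your proposal founders at its opening move: you take as given that $X_2 \mid X_1 = x_1$ is Poisson, with only its mean unknown. That is not a hypothesis of the theorem. The theorem assumes exactly two things: the regression function $E(X_2\mid X_1=x_1)=\lambda_3 x_1$, and that the marginal of $X_2$ is the Neyman Type A law (5.6). That the conditionals of $X_2$ given $X_1$ are of Poisson form is part of the \emph{conclusion} --- it is the defining feature of the Poisson--Poisson law --- so assuming it at the outset begs the question, and what is left for you to prove (that the $X_1$-marginal is $\mathscr{P}(\lambda_1)$) is the easy half. The paper's proof is organized in exactly the opposite way: it treats the conditional structure as the unknown, writing the regression hypothesis as $\mu(x_1,\lambda_3)\,p_1(x_1)=\sum_{x_2} x_2\, f_1(x_1\mid x_2)\,p_2(x_2)$ with $f_1(x_1\mid x_2)$ (the conditional of $X_1$ given $X_2$) unspecified, inserting the explicit Neyman Type A mass function $p_2$, converting the identity into the differential equation $\mu(x_1,\lambda_3)\,h(x_1,\lambda_3)=\lambda_3\, h'(x_1,\lambda_3)$ for $h(x_1,\lambda_3)=\sum_{x_2}f_1(x_1\mid x_2)\lambda_3^{x_2}\nu(x_2)$, solving it as $c(x_1)e^{\lambda_3 x_1}$, matching coefficients of $\lambda_3^{x_2}$ to recover $f_1(x_1\mid x_2)$ up to the factor $c(x_1)$, and finally fixing $c(x_1)$ by normalization over $x_1$. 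None of that work has a counterpart in your proposal.

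The part you do write out is correct as far as it goes: granted Poisson conditionals, $G_{X_2}(t_2)=G_{X_1}\bigl(e^{\lambda_3(t_2-1)}\bigr)$, and matching this against (5.6) on the image interval inside $(0,1)$ forces $G_{X_1}(s)=e^{\lambda_1(s-1)}$ by uniqueness of power-series coefficients. So you have proved the weaker statement in which the Pseudo-Poisson structure is itself hypothesized. Be aware, moreover, that your extra assumption cannot be discharged by a soft argument afterwards: over the class of all bivariate laws on the nonnegative integers, the two stated hypotheses genuinely do not determine the joint distribution. For instance, starting from the Poisson--Poisson law, perturb the conditionals of $X_2$ given $X_1=a$ and $X_1=b$ (any $a,b\ge 1$) by $c(\delta_0-2\delta_1+\delta_2)$ and $-\bigl(p_1(a)/p_1(b)\bigr)c(\delta_0-2\delta_1+\delta_2)$ respectively, with $c$ small; both perturbations have zero total mass and zero first moment, so the regression function and the $X_2$-marginal are untouched, yet the conditionals are no longer Poisson. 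This is why the paper's argument is, implicitly, a family-level characterization in the style of Kyriakoussis and Papageorgiou, exploiting how $p_2$ and the regression vary jointly with $\lambda_3$. Assuming the Poisson conditional form outright, as you do, replaces the stated theorem by a different and much weaker one.
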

	
	\begin{proof}
		Given that
		\begin{eqnarray*}
			\mu(x_1,\lambda_3):=E(X_2|X_1=x_1) = \sum_{x_2=0}^{\infty} x_2 f_2(x_2|x_1) = \sum_{x_2=0}^{\infty} x_2\frac{f_1(x_1|x_2) p_2(x_2)}{p_1(x_1)}.
		\end{eqnarray*}
		Therefore,
		\begin{eqnarray}
		\mu(x_1,\lambda_3) p_1(x_1) &=& \sum_{x_2=0}^{\infty} x_2 f_1(x_1|x_2) p_2(x_2) \nonumber \\
		\mu(x_1,\lambda_3) \sum_{x_2=0}^{\infty} f_1(x_1|x_2) p_2(x_2) &=& \sum_{x_2=0}^{\infty} x_2 f_1(x_1|x_2) p_2(x_2).
		\end{eqnarray}
		Now, using $p_2(x_2)$ in the above equation we have
		\begin{eqnarray*}
			\mu(x_1,\lambda_3) \sum_{x_2=0}^{\infty} \Bigg \{ f_1(x_1|x_2) \frac{e^{-\lambda_1}\lambda_3^{x_2}}{x_2!} \sum_{j=0}^{\infty} \frac{(\lambda_1 e^{-\lambda_3})^j j^{x_2}}{j!} \Bigg\} =  \\ \sum_{x_2=0}^{\infty} \Bigg \{ x_2 f_1(x_1|x_2) \frac{e^{-\lambda_1}\lambda_3^{x_2}}{x_2!} \sum_{j=0}^{\infty} \frac{(\lambda_1 e^{-\lambda_3})^j j^{x_2}}{j!} \Bigg\}.
		\end{eqnarray*}
		Now,
		\begin{eqnarray}
		\mu(x_1,\lambda_3) \sum_{x_2=0}^{\infty} f_1(x_1|x_2) e^{-\lambda_1} \lambda_3^{x_2} \nu(x_2) = \sum_{x_2=0}^{\infty}x_2 f_1(x_1|x_2) e^{-\lambda_1} \lambda_3^{x_2} \nu(x_2)
		\end{eqnarray}
		where 
		\begin{eqnarray*}
			\nu(x_2) = \frac{1}{x_2!} \sum_{j=0}^{\infty} \frac{(\lambda_1 e^{-\lambda_3})^j j^{x_2}}{j!}.
		\end{eqnarray*}
		Denote
		\begin{eqnarray*}
			h(x_1,\lambda_3) = \sum_{x_2=0}^{\infty} f_1(x_1|x_2) \lambda_3^{x_2} \nu(x_2).
		\end{eqnarray*}
		Therefore, $(5.16)$ becomes
		\begin{eqnarray}
		\mu(x_1,\lambda_3) h(x_1,\lambda_3) &=& \lambda_3 h'(x_1,\lambda_3) \nonumber \\
		\frac{h'(x_1,\lambda_3)}{h(x_1,\lambda_3)} &=& \frac{\mu(x_1,\lambda_3)}{\lambda_3}
		\end{eqnarray}
		where $h'(x_1,\lambda_3) = \frac{\partial}{\partial \lambda_3} h(x_1,\lambda_3)$. Now, the solution to the above differential equation is 
		
		\begin{eqnarray*}
			h(x_1,\lambda_3) &=& c(x_1) \exp \Bigg \{ \int  \frac{\mu(x_1,\lambda_3)}{\lambda_3} d\lambda_3 \Bigg\}	 \\
			&=& c(x_1)  \exp \Bigg \{ \int  \frac{\lambda_3 x_1}{\lambda_3} d\lambda_3 \Bigg\} \\
			&=& c(x_1) e^{\lambda_3 x_1}.
		\end{eqnarray*}	
		
		Now,
		\begin{eqnarray}
		h(x_1,\lambda_3) = \sum_{x_2=0}^{\infty} f_1(x_1|x_2) \lambda_3^{x_2} \nu(x_2) &=& c(x_1) e^{\lambda_3 x_1} \nonumber \\
		\sum_{x_2=0}^{\infty} f_1(x_1|x_2) \lambda_3^{x_2} \nu(x_2) &=& c(x_1) \sum_{x_2=0}^{\infty}  \frac{(\lambda_3 x_1)^{x_2}}{x_2!}.
		\end{eqnarray}
		From the above equation comparing the coefficients of $\lambda_3^{x_2}$, we have
		\begin{eqnarray*}
			f_1(x_1|x_2) \nu(x_2) &=& c(x_1) \frac{x_1^{x_2}}{x_2!} \\
			f_1(x_1|x_2) &=& \frac{c(x_1)}{\nu(x_2)} \frac{x_1^{x_2}}{x_2!}.
		\end{eqnarray*}
		Now, substituting for $\nu(x_2)$ we get
		\begin{eqnarray}
		f_1(x_1|x_2) = \frac{c(x_1) x_1^{x_2}}{\sum_{j=0}^{\infty} \frac{(\lambda_1 e^{-\lambda_3})^j j^{x_2}}{j!}}; \mbox{        } x_1=0,1,2,... .	
		\end{eqnarray}
		To find the value of $c(x_1)$, by summing over $x_1$, we get
		\begin{eqnarray*}
			\sum_{x_1=0}^{\infty} f_1(x_1|x_2)= 1 &=& \frac{\sum_{x_1=0}^{\infty} c(x_1)x_1^{x_2}}{\sum_{j=0}^{\infty} \frac{(\lambda_1 e^{-\lambda_3})^j}{j!}j^{x_2}}; \mbox{        } \forall x_2=0,1,2,...\nonumber \\
			\sum_{j=0}^{\infty} \frac{(\lambda_1 e^{-\lambda_3})^j}{j!} j^{x_2} &=& \sum_{x_1=0}^{\infty} c(x_1) x_1^{x_2}; \mbox{        } \forall x_2=0,1,2,...
		\end{eqnarray*}
		Now,  the above equality can also be rewritten as
		\begin{eqnarray}
		\sum_{i=0}^{\infty} \Bigg[ c(i) - \frac{(\lambda_1 e^{-\lambda_3})^i}{i!}  \Bigg] i^{x_2} =0; \mbox{        } \forall x_2=0,1,2,... .
		\end{eqnarray}
		The above equality will be satisfied if 
		\begin{eqnarray}
		c(i)= \frac{(\lambda_1 e^{-\lambda_3})^i}{i!}, \mbox{       } \forall i.
		\end{eqnarray}
		Therefore, $(5.19)$ will be
		\begin{eqnarray}
		f_1(x_1|x_2) = \frac{(\lambda_1 e^{-\lambda_3})^{x_1}}{x_1!} \frac{x_1^{x_2}}{\sum_{j=0}^{\infty} \frac{(\lambda_1 e^{-\lambda_3})^j j^{x_2}}{j!}},
		\end{eqnarray}
		so that, 
		\begin{eqnarray*}
			f(x_1,x_2) = P(X_1=x_1,X_2=x_2) = f_1(x_1|x_2) p_2(x_2) = \frac{e^{-\lambda_1}\lambda_1^{x_1}}{x_1!} \frac{e^{-\lambda_3 x_1}(\lambda_3 x_1)^{x_2}}{x_2!},
		\end{eqnarray*}
		as claimed.
	\end{proof}

	\subsection{Fisher Dispersion Index}
	In this section, we derive the Fisher dispersion index for the bivariate Pseudo-Poisson distribution.  In the present note, we use the definition of the bivariate Fisher dispersion index provided by Kokonendji and Puig \cite{kp18}.
	\vspace{0.5cm}
	The marginal dispersion indices are
	\begin{eqnarray}
	DI(X_1) &=& \frac{Var(X_1)}{E(X_1)} = 1  \mbox{  (equi-dispersion)}. \\
	DI(X_2) &=& \frac{Var(X_2)}{E(X_2)} = \frac{\lambda_2 + \lambda_3 \lambda_1 + \lambda_3^2 \lambda_1}{\lambda_2 + \lambda_3 \lambda_1} \nonumber \\ &=& 1 +  \frac{\lambda_3^2 \lambda_1}{\lambda_2 + \lambda_3 \lambda_1} \mbox{  ( over-dispersion)  }.
	\end{eqnarray}
	
	We state and prove the following theorem.

	\begin{theorem}
		The bivariate Pseudo-Poisson distribution is always over-dispersed.
	\end{theorem}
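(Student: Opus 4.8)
The plan is to feed the moment formulas already derived in Section 5.1 into the bivariate Fisher dispersion index of Kokonendji and Puig and to show that the resulting scalar never drops below the Poisson benchmark value $1$. Writing $\boldsymbol{\mu}=(\mu_1,\mu_2)^{\top}$ for the mean vector and $\Sigma=(\sigma_{ij})$ for the covariance matrix, the generalized dispersion index has the form of a normalized quadratic form,
$$ GDI(\boldsymbol{X})=\frac{\boldsymbol{\mu}^{\top}\Sigma\,\boldsymbol{\mu}}{\boldsymbol{\mu}^{\top}\mathrm{diag}(\boldsymbol{\mu})\,\boldsymbol{\mu}}, $$
which equals $1$ for an equi-dispersed (independent Poisson) reference and exceeds $1$ in the over-dispersed case. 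Thus the whole statement reduces to the single inequality $\boldsymbol{\mu}^{\top}\Sigma\,\boldsymbol{\mu}\ge \boldsymbol{\mu}^{\top}\mathrm{diag}(\boldsymbol{\mu})\,\boldsymbol{\mu}$.

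First I would substitute the quantities computed earlier, namely $\mu_1=\lambda_1$, $\mu_2=\lambda_2+\lambda_3\lambda_1$, $\sigma_{11}=\lambda_1$, $\sigma_{22}=\lambda_2+\lambda_3\lambda_1+\lambda_3^2\lambda_1$ and $\sigma_{12}=\lambda_1\lambda_3$. Expanding the two quadratic forms in dimension two produces the numerator $\mu_1^2\sigma_{11}+2\mu_1\mu_2\sigma_{12}+\mu_2^2\sigma_{22}$ and the denominator $\mu_1^3+\mu_2^3$.

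The crux of the argument is the equi-dispersion of the first coordinate: because $\sigma_{11}=\mu_1$ (the fact $DI(X_1)=1$ established above), the term $\mu_1^2\sigma_{11}$ exactly matches $\mu_1^3$ in the denominator and cancels. What remains to be shown is
$$ 2\mu_1\mu_2\sigma_{12}+\mu_2^2(\sigma_{22}-\mu_2)\ge 0. $$
Here $\sigma_{22}-\mu_2=\lambda_3^2\lambda_1$ is precisely the excess appearing in the marginal over-dispersion of $X_2$ recorded above, while the cross term carries $\sigma_{12}=\lambda_1\lambda_3$. Since $\lambda_1,\mu_1,\mu_2$ are all positive and $\lambda_3\ge 0$, both summands are non-negative, so the inequality holds and $GDI(\boldsymbol{X})\ge 1$.

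I do not expect a genuine obstacle in the calculation; the only point demanding care is the boundary of the parameter space. The displayed inequality is an equality exactly when $\lambda_3=0$, which is the independence case where $(X_1,X_2)$ is a pair of independent Poisson variables and the model is merely equi-dispersed. Accordingly I would phrase the conclusion as $GDI(\boldsymbol{X})\ge 1$, with strict over-dispersion whenever $\lambda_3>0$ (i.e.\ whenever the two coordinates are genuinely dependent), equi-dispersion occurring only in the degenerate independent case. If a different but equivalent normalization of the Kokonendji--Puig index is intended (for instance weighting by $\sqrt{\mu_i}$ or by the all-ones vector), the same three ingredients, $\sigma_{11}=\mu_1$, $\sigma_{22}>\mu_2$ and $\sigma_{12}\ge 0$, still force the numerator to dominate, so the conclusion is insensitive to that choice.
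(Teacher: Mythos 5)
Your proof is correct and takes essentially the same route as the paper: substitute the Section 5.1 moments into the Kokonendji--Puig index and observe that $\sigma_{11}=\mu_1$, $\sigma_{12}=\lambda_1\lambda_3\ge 0$, and $\sigma_{22}-\mu_2=\lambda_3^2\lambda_1\ge 0$ force the numerator to dominate the denominator. Two details: the index as the paper uses it weights by $\sqrt{\mu_i}$ rather than $\mu_i$ (numerator $\mu_1\sigma_{11}+2\sqrt{\mu_1\mu_2}\,\sigma_{12}+\mu_2\sigma_{22}$, denominator $\mu_1^2+\mu_2^2$), a variant your closing robustness remark correctly covers since the same three ingredients apply verbatim; and your handling of the boundary is actually sharper than the paper's, which asserts $GDI>1$ outright even though equality $GDI=1$ holds at $\lambda_3=0$, the independence case you correctly single out.
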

	\begin{proof}
		Define, for the Pseudo-Poisson model
		\begin{eqnarray}
		E(\textbf{X}) = (\lambda_1, \lambda_2 + \lambda_3 \lambda_1)^T
		\end{eqnarray}
		\[
		cov(\textbf{X})=
		\begin{bmatrix}
		\lambda_1 &  \lambda_1 \lambda_3 \\
		\lambda_1 \lambda_3 & \lambda_2 + \lambda_3 \lambda_1 + \lambda^2_3 \lambda_1
		\end{bmatrix}.
		\]

		Now,
		
		\begin{eqnarray*}
			E(\textbf{X})^TE(\textbf{X}) &=& \lambda_1^2 + (\lambda_2 + \lambda_3 \lambda_1)^2 \\
			\sqrt{E(\textbf{X})}^T (cov(\textbf{X})) \sqrt{E(\textbf{X})}) &=& \lambda_1^2 + 2 \lambda_1^{\frac{3}{2}} \lambda_3 \sqrt{\lambda_2 + \lambda_3 \lambda_1} + \\ && (\lambda_2 +  \lambda_3\lambda_1) (\lambda_2 + \lambda_3 \lambda_1 + \lambda_3^2 \lambda_1).
		\end{eqnarray*}
		
		Using the definition given in Kokonendji and Puig \cite{kp18} page 183, we have.
		
		\begin{eqnarray}
		GDI(\textbf{X}) &=& \frac{\lambda_1^2 + 2 \lambda_1^{\frac{3}{2}} \lambda_3 \sqrt{\lambda_2 + \lambda_3 \lambda_1} + (\lambda_2 +  \lambda_3\lambda_1) (\lambda_2 + \lambda_3 \lambda_1 + \lambda_3^2 \lambda_1)}{\lambda_1^2 + (\lambda_2 + \lambda_3 \lambda_1)^2} \nonumber \\
		&=& 1 + \frac{ 2 \lambda_1^{\frac{3}{2}} \lambda_3 \sqrt{\lambda_2 + \lambda_3 \lambda_1} + (\lambda_2 +  \lambda_3\lambda_1)  \lambda_3^2 \lambda_1}{\lambda_1^2 + (\lambda_2 + \lambda_3 \lambda_1)^2}>1,
		\end{eqnarray}
		which indicates over-dispersion as claimed.
	\end{proof}
	
	\vspace{0.5cm}
	
	Note that in a set of bivariate count data, if one marginal is equi-dispersed and other is over-dispersed, one can consider the bivariate Pseudo-Poisson distribution as a possible model.

	\vspace{0.5cm}

	\bigskip
	
	\bigskip
	\section{Statistical Inference}
	In this section  we obtain moment and maximum likelihood estimators of parameters  $\lambda_1$,$\lambda_2$ and $\lambda_3$. In addition, we construct the likelihood ratio test for the simpler submodel, i.e., for $\lambda_2=\lambda_3$. Finally, we consider a simulation study and a real-life application of the bivariate Pseudo-Poisson distribution.
	\subsection{Moments and Moment Estimators}
	Now suppose that we have data of the form $\boldsymbol{X}^{(1)},\boldsymbol{X}^{(2)},...,\boldsymbol{X}^{(n)}$ which are i.i.d. with common distribution  (5.1)-(5.2). Note that, for each $i$, $\boldsymbol{X}^{(i)}=(X_{1i},X_{2i})^T$. Method of moments estimators of the parameters are readily derived. 
	The respective simpler submodels and their statistical inference are also considered in the following section.
	\vspace{.5cm}

	Now, if we equate the sample means and the sample covariance to their expectations, and if $M_1>0$, we obtain the following
	consistent asymptotically normal method of moments estimates.

	\begin{eqnarray}
	\tilde{\lambda}_1 &=& M_1 \\
	\tilde{\lambda}_2 &=& M_2-S_{12} \\
	\tilde{\lambda}_3 &=& \frac{S_{12}}{M_1}
	\end{eqnarray}
	where
	\begin{eqnarray*}
		M_1 &=& \frac{1}{n}\sum_{i=1}^n X_{1i}	\\
		M_2 &=& \frac{1}{n}\sum_{i=1}^n X_{2i},
	\end{eqnarray*}
	and
	\begin{eqnarray*}
		S_{12}=\frac{1}{n} \sum_{i=1}^n (X_{1i}-M_1)(X_{2i}-M_2).
	\end{eqnarray*}
	
	\bigskip
	
	If we consider the simpler sub-model in which $\lambda_2=0,$ then the method of moments estimates of the remaining two $\lambda$'s are even simpler. Thus, again provided that $M_1>0$,
	
	\begin{eqnarray}
	\tilde{\lambda}_1 &=& M_1 \\
	\tilde{\lambda}_3 &=& \frac{M_2}{M_1}.
	\end{eqnarray} 
	
	For the sub-model in which $\lambda_2=\lambda_3$, method of moments estimates of the parameters are given by
	\begin{eqnarray}
	\tilde{\lambda}_1 &=& M_1 \\
	\tilde{\lambda}_3 &=& \frac{M_2}{1+M_1}.
	\end{eqnarray}

	\subsection{Maximum Likelihood Estimators}
	
	In the two-parameter model (i.e. when $\lambda_2=0$ or $\lambda_2=\lambda_3$) the maximum likelihood estimates can be verified to coincide with the method of moments estimates derived in the previous subsection, provided that $M_1>0$.  
	
	Identifying the maximum likelihood estimator's (m.l.e.) in the three parameter model is a little more challenging. For the given data of the form $\boldsymbol{X}^{(1)},\boldsymbol{X}^{(2)},...,\boldsymbol{X}^{(n)}$ which are i.i.d. with common distribution  (5.1)-(5.2) then the likelihood function is as follows
	
	\begin{eqnarray} \label{ljpp}
	L(\boldsymbol{\theta}) &=& \prod_{i=1}^{n} \Bigg \{ \frac{e^{-\lambda_1} \lambda_1^{x_{1i}}}{x_{1i}!}  
	\frac{e^{-(\lambda_2 + \lambda_3 x_{1i})} (\lambda_2 + \lambda_3 x_{1i})^{x_{2i}}}{x_{2i}!} \Bigg \} \nonumber \\
	&=& \frac{e^{-n(\lambda_1 + \lambda_2)} \lambda_1^{\sum_{i=1}^n x_{1i}} e^{-\lambda_3 \sum_{i=1}^n x_{1i}} \prod_{i=1}^{n} (\lambda_2 + \lambda_3 x_{1i})^{x_{2i}} }{\prod_{i=1}^{n}\{(x_{1i})! (x_{2i})!\}}
	\end{eqnarray}
	where $\boldsymbol{\theta}=(\lambda_1,\lambda_2,\lambda_3)^T$.
	
	The corresponding log-likelihood function is 
	\begin{eqnarray}  \label{lljpp}
	\l = \log L(\boldsymbol{\theta}) &=&  -n (\lambda_1 + \lambda_2) + \log(\lambda_1) \sum_{i=1}^n x_{1i} -\lambda_3 \sum_{i=1}^n x_{1i} \nonumber \\ &&+ \sum_{i=1}^n x_{2i} \log(\lambda_2 + \lambda_3 x_{1i}) + h(\boldsymbol{x}_1,\boldsymbol{x}_2)
	\end{eqnarray} 
	where $h(\boldsymbol{x}_1,\boldsymbol{x}_2) =\log \Big (\frac{1}{\prod_{i=1}^{n}\{(x_{1i})! (x_{2i})!\}} \Big )$.
	
	\bigskip
	Now, differentiating with respect to the $\lambda_i$'s we get the following the likelihood equations
	
	\begin{eqnarray} \label{mle}
	-n+\frac{1}{\lambda_1}\sum_{i=1}^n X_{1i} &=& 0  \label{lik-eq-1}\\
	-n+\sum_{i=1}^n\frac{X_{2i}}{\lambda_2+\lambda_3X_{1i}} &=& 0  \label{lik-eq-2}\\
	-\sum_{i=1}^nX_{1i}+\sum_{i=1}^n\frac{X_{1i}X_{2i}}{\lambda_2+\lambda_3X_{1i}} &=& 0. \label{lik-eq-3}
	\end{eqnarray}
	
	If $M_1=(1/n)\sum_{i=1}^nX_{1i}=0$, then there is no solution to (\ref{lik-eq-1}), otherwise Equation (\ref{lik-eq-1}) is readily solved, yielding the m.l.e. for $\lambda_1$, namely
	\begin{eqnarray}
	\hat{\lambda}_1=M_1.
	\end{eqnarray}
	
	The remaining two equations must be solved numerically (provided that $M_1>0$), to obtain $\hat{\lambda}_2$ and $\hat{\lambda}_3$.

	\bigskip
	
	\subsection{Likelihood Ratio Test}
	As usual, the general form of a generalized likelihood ratio test statistic is of the form
	\begin{eqnarray} \label{lrt}
	\Lambda = \frac{\sup_{ \boldsymbol{\theta} \in \Theta_0} L(\boldsymbol{\theta})}{\sup_{\boldsymbol {\theta} \in \Theta} L(\boldsymbol{\theta})}.
	\end{eqnarray}
	
	Here, $\Theta_0$ is a subset of $\Theta$ and we envision testing $H_0:\boldsymbol{\theta} \in \Theta_0$. We reject the null hypothesis for a small value of $\Lambda$.
	
	\vspace{0.5cm}
	In the following section we construct likelihood ratio test for the simpler submodels.

	\subsubsection{Submodel I: For $\lambda_2=\lambda_3$, equivalently, testing for $H_0: \lambda_2=\lambda_3$}
	
	The natural parameter space under the full model is
	$\Theta =\{(\lambda_1,\lambda_2,\lambda_3)^T: \lambda_1>0, \lambda_2 \geq 0,\lambda_3 \geq 0 \}$. 
	Besides, under the null hypothesis the natural parameter space is  $\Theta_0 = \{(\lambda_1,\lambda_3)^T: \lambda_1>0,\lambda_3 > 0 \}$.
	
	\vspace{0.5cm}
	
	Under $H_0$, equation (\ref{lljpp}) will be
	
	\begin{equation} 
	l =  -n (\lambda_1 + \lambda_3) + \log(\lambda_1) \sum_{i=1}^n x_{1i} 
	-\lambda_3 \sum_{i=1}^n x_{1i} + \sum_{i=1}^n x_{2i} \log[ \lambda_3 (x_{1i}+1)] + h(\boldsymbol{x}_1,\boldsymbol{x}_2).
	\end{equation}

	Now,  taking partial derivatives with respect to $\lambda_1$ and $\lambda_3$ and equating them  to zero, we get the following equations
	
	\begin{eqnarray*} 
		-n + \frac{1}{\lambda_1} \sum_{i=1}^n X_{1i} &=& 0 \\
		-n-\sum_{i=1}^n X_{1i} + \sum_{i=1}^n \frac{X_{2i}}{ \lambda_3} &=& 0.
	\end{eqnarray*}
	
	Then, the m.l.e's of $\lambda_1$ and $\lambda_3$ are, provided that $M_1>0$, given by 
	
	\begin{eqnarray*}
		\widehat{\lambda}^*_1 &=& M_1 \\
		\widehat{\lambda}^*_3 &=& \frac{M_2}{1+M_1}.
	\end{eqnarray*}
	
	\noindent Note that these estimates agree with the method of moments estimates given in Section 6.1.
	
	\vspace{0.5cm}
	
	Now, in the unrestricted parameter space $\Theta$, i.e., under the full model, the m.l.e's for $\lambda_1$, $\lambda_2$ and $\lambda_3$ are obtained from equations (\ref{lik-eq-1})--(\ref{lik-eq-3}). 
	
	\vspace{0.5cm}
	
	Let 
	$\widehat{\lambda}_1$,$\widehat{\lambda}_2$ and $\widehat{\lambda}_3$ be the respective m.l.e's of $\lambda_i$'s then the generalized likelihood ratio test statistic defined in (\ref{lrt}) will be
	
	\begin{eqnarray*} 
		\Lambda_1 = \frac{\frac{e^{-n(	\widehat{\lambda}^*_1  + \widehat{\lambda}^*_3 )} (	\widehat{\lambda}^*_1)^{\sum_{i=1}^n x_{1i}} e^{-\widehat{\lambda}^*_3 \sum_{i=1}^n x_{1i}} \prod_{i=1}^{n} [\widehat{\lambda}^*_3 (1+x_{1i})]^{x_{2i}} }{\prod_{i=1}^{n}\{(x_{1i})! (x_{2i})!\}}}{\frac{e^{-n(\widehat{\lambda}_1 + \widehat{\lambda}_2)} (\widehat{\lambda}_1)^{\sum_{i=1}^n x_{1i}} e^{-\widehat{\lambda}_3 \sum_{i=1}^n x_{1i}} \prod_{i=1}^{n} (\widehat{\lambda}_2 + \widehat{\lambda}_3 x_{1i})^{x_{2i}} }{\prod_{i=1}^{n}\{(x_{1i})! (x_{2i})!\}}}.
	\end{eqnarray*}
	
	Since $\widehat{\lambda}_1 = 	\widehat{\lambda}^*_1$,  the above test statistic simplifies to become
	
	\begin{eqnarray} 
	\Lambda_1 = e^{n \widehat{\lambda}_2 } e^{-(	\widehat{\lambda}^*_3 -\widehat{\lambda}_3 ) \sum_{i=1}^n x_{1i}} \prod_{i=1}^{n} \Bigg [\frac{\widehat{\lambda}^*_3 (1+x_{1i})}{\widehat{\lambda}_2 + \widehat{\lambda}_3 x_{1i}} \Bigg ]^{x_{2i}}.
	\end{eqnarray}

	Now by taking the logarithm, we have
	\begin{eqnarray}
	\log \Lambda_1 = n \widehat{\lambda}_2 - (\widehat{\lambda}^*_3 - \widehat{\lambda}_3 ) \sum_{i=1}^n x_{1i} + \sum_{i=1}^n x_{2i} \log \Bigg [ \frac{\widehat{\lambda}^*_3 (1+x_{1i})}{\widehat{\lambda}_2 + \widehat{\lambda}_3 x_{1i}}  \Bigg ].
	\end{eqnarray}
	
	\vspace{0.5cm}
	
	If $n$ is large, then $-2\log \Lambda_1$ may be compared with a suitable $\chi^2_1$ percentile in order to decide whether $H_0$ should be accepted.
	
	\subsubsection{Submodel II: For $\lambda_2=0$, equivalently, testing for $H_0: \lambda_2=0$}
	
	Under the null hypothesis the natural parameter space is  $\Theta_0 = \{(\lambda_1,\lambda_3)^T: \lambda_1>0,\lambda_3 > 0 \}$ and the m.l.e's of $\lambda_1$ and $\lambda_3$ are,
	provided that $M_1>0$, given by 
	
	\begin{eqnarray*}
		\widehat{\lambda}^*_1 &=& M_1 \\
		\widehat{\lambda}^*_3 &=& \frac{M_2}{M_1}.
	\end{eqnarray*}
	
	Also, these estimates coincide  with the method of moments estimates given in Section 6.1.

	\vspace{0.5cm}
	
	The natural parameter space under the full model is
	$\Theta =\{(\lambda_1,\lambda_2,\lambda_3)^T: \lambda_1>0, \lambda_2 \geq 0,\lambda_3 \geq 0 \}$ and $\widehat{\lambda}_1$, $\widehat{\lambda}_2$ and $\widehat{\lambda}_3$ are respective  m.l.e's obtained from equation (\ref{lik-eq-1})--(\ref{lik-eq-3}).
	
	\vspace{0.5cm}
	
	Therefore, the generalized likelihood ratio test statistic defined in (\ref{lrt}) will be
	
	\begin{eqnarray*} 
		\Lambda_2 = \frac{\frac{e^{-n(	\widehat{\lambda}^*_1  + \widehat{\lambda}^*_3 )} (	\widehat{\lambda}^*_1)^{\sum_{i=1}^n x_{1i}} e^{-\widehat{\lambda}^*_3 \sum_{i=1}^n x_{1i}} \prod_{i=1}^{n} (\widehat{\lambda}^*_3 x_{1i})^{x_{2i}} }{\prod_{i=1}^{n}\{(x_{1i})! (x_{2i})!\}}}{\frac{e^{-n(\widehat{\lambda}_1 + \widehat{\lambda}_2)} (\widehat{\lambda}_1)^{\sum_{i=1}^n x_{1i}} e^{-\widehat{\lambda}_3 \sum_{i=1}^n x_{1i}} \prod_{i=1}^{n} (\widehat{\lambda}_2 + \widehat{\lambda}_3 x_{1i})^{x_{2i}} }{\prod_{i=1}^{n}\{(x_{1i})! (x_{2i})!\}}}.
	\end{eqnarray*}
	
	Since $\widehat{\lambda}_1 = 	\widehat{\lambda}^*_1$, then the above test statistic becomes 
	
	\begin{eqnarray} 
	\Lambda_2 = e^{n \widehat{\lambda}_2 } e^{-(	\widehat{\lambda}^*_3 -\widehat{\lambda}_3 ) \sum_{i=1}^n x_{1i}} \prod_{i=1}^{n} \Bigg (\frac{\widehat{\lambda}^*_3 x_{1i}}{\widehat{\lambda}_2 + \widehat{\lambda}_3 x_{1i}} \Bigg )^{x_{2i}}.
	\end{eqnarray}

	Taking logarithm, we have
	\begin{eqnarray}
	\log \Lambda_2 = n \widehat{\lambda}_2 - (\widehat{\lambda}^*_3 - \widehat{\lambda}_3 ) \sum_{i=1}^n x_{1i} + \sum_{i=1}^n x_{2i} \log \Bigg (\frac{\widehat{\lambda}^*_3 x_{1i}}{\widehat{\lambda}_2 + \widehat{\lambda}_3 x_{1i}}  \Bigg ).
	\end{eqnarray}
	
	\vspace{0.5cm}
	
	If $n$ is large, then $-2\log \Lambda_2$ may be compared with a suitable $\chi^2_1$ percentile in order to decide whether $H_0$ should be accepted.

	\subsubsection{ Testing for independence, i.e.,  $H_0: \lambda_3=0$}
	
	Under the null hypthesis the natural parameter space is  $\Theta_0 = \{(\lambda_1,\lambda_2)^T: \lambda_1>0,\lambda_2 > 0 \}$ and the m.l.e's of $\lambda_1$ and $\lambda_2$ are,
	provided that $M_1>0$,  given by 
	
	\begin{eqnarray*}
		\widehat{\lambda}^*_1 &=& M_1 \\
		\widehat{\lambda}^*_2 &=& M_2.
	\end{eqnarray*}
	
	Also, these estimates be coincide with the method of moments estimates as given in Section 6.1.

	\vspace{0.5cm}
	
	The natural parameter space under the full model is
	$\Theta =\{(\lambda_1,\lambda_2,\lambda_3)^T: \lambda_1>0, \lambda_2  \geq 0,\lambda_3 \geq 0 \}$ and $\widehat{\lambda}_1$, $\widehat{\lambda}_2$ and $\widehat{\lambda}_3$ are the respective  m.l.e's obtained from equation (\ref{lik-eq-1})--(\ref{lik-eq-3}).
	
	\vspace{0.5cm}
	
	Therefore, the generalized likelihood ratio test statistic defined in (\ref{lrt}) will be
	
	\begin{eqnarray*} 
		\Lambda_3 = \frac{\frac{e^{-n(	\widehat{\lambda}^*_1  + \widehat{\lambda}^*_2 )} (	\widehat{\lambda}^*_1)^{\sum_{i=1}^n x_{1i}} e^{-\widehat{\lambda}^*_2 \sum_{i=1}^n x_{2i}}   }{\prod_{i=1}^{n}\{(x_{1i})! (x_{2i})!\}}}{\frac{e^{-n(\widehat{\lambda}_1 + \widehat{\lambda}_2)} (\widehat{\lambda}_1)^{\sum_{i=1}^n x_{1i}} e^{-\widehat{\lambda}_3 \sum_{i=1}^n x_{1i}} \prod_{i=1}^{n} (\widehat{\lambda}_2 + \widehat{\lambda}_3 x_{1i})^{x_{2i}} }{\prod_{i=1}^{n}\{(x_{1i})! (x_{2i})!\}}}.
	\end{eqnarray*}
	
	Since $\widehat{\lambda}_1 = 	\widehat{\lambda}^*_1$, then the above test statistic becomes 
	
	\begin{eqnarray} 
	\Lambda_3 = e^{n \widehat{\lambda}_2 } e^{-(	\widehat{\lambda}^*_2 -\widehat{\lambda}_2 ) \sum_{i=1}^n x_{1i}} \prod_{i=1}^{n} \Bigg (\frac{\widehat{\lambda}^*_2 }{\widehat{\lambda}_2 + \widehat{\lambda}_3 x_{1i}} \Bigg )^{x_{2i}}.
	\end{eqnarray}

	Taking logarithm, we have
	\begin{eqnarray}
	\log \Lambda_3 = n \widehat{\lambda}_2 - (\widehat{\lambda}^*_2 - \widehat{\lambda}_2 ) \sum_{i=1}^n x_{1i} + \sum_{i=1}^n x_{2i} \log \Bigg (\frac{\widehat{\lambda}^*_2 }{\widehat{\lambda}_2 + \widehat{\lambda}_3 x_{1i}}  \Bigg ).
	\end{eqnarray}
	
	\vspace{0.5cm}
	
	If $n$ is large then $-2\log \Lambda_3$ may be compared with a suitable $\chi^2_1$ percentile in order to decide whether $H_0$ should be accepted.

	\subsection{Examples}
	In the following two sub-sections we provide a simulation study and give examples  of real-life applications of the bivariate Pseudo-Poisson distribution.
	
	\subsubsection{Simulation data}
	Simulating from Pseudo models is straightforward because of the marginal and conditional structure of the model. In the following we give a simple simulation algorithm for the bivariate Pseudo-Poisson model with linear regression. For a given $\lambda_1$, $\lambda_2$ and $\lambda_3$. 
	\begin{description}
		\item[Step 1:] Simulate $x_1$ from $\mathscr{P}(\lambda_1)$.
		\item[Step 2:] Simulate $x_2$ from $\mathscr{P}(\lambda_2 + \lambda_3 x_1)$.
	\end{description}
	Repeat the above two steps for the desired number of observations.
	\bigskip
	
	We have simulated  $10,000$ data sets of sample size $n= 50,100,500,1000$  for the parameter values $\lambda_1=1$, $\lambda_2=3$ and $\lambda_3 =4$.  The corresponding moment and m.l.e's and also their bootstrapped standard errors are displayed in the  \\
	Table  \ref{simulation_data} \footnote{SE: Standard Error;     PC:Pearson Correlation}.  Note that with increase in sample size the moment and m.l.e.'s standard error (SE) decreases and the Pearson correlation (PC) converges to the population correlation.

	\begin{table}
		\caption{Simulation}  
		\label{simulation_data}
		\small % text size of table content
		\centering % center the table
		\begin{tabular}{lcccccr} % alignment of each column data
			\toprule[\heavyrulewidth]\toprule[\heavyrulewidth]
			\textbf{$n$ }  & \textbf{Parameter} & \textbf{Moment} & \textbf{MLE} & \textbf{SE(Moment)} & \textbf{SE(MLE)} & \textbf{PC} \\ 
			\midrule
			\multirow{5}{*}{$50$} & $\lambda_1$ & $1.000$ & $1.000$ & $0.142$ & $0.142$ & \multirow{5}{*}{$0.831$}\\
			& $\lambda_2$ & $3.086$ & $3.000$ & $0.896$ & $0.390$ \\
			& $\lambda_3$ & $3.911$ & $3.998$ & $0.898$ & $0.421$\\
			& $\rho$ & $0.814$ & $0.830$& $0.063$ & $0.027$ \\
			\bottomrule[\heavyrulewidth] 
			\multirow{5}{*}{$100$} & $\lambda_1$ & $0.999$ & $0.999$ & $0.099$ & $0.099$ & \multirow{5}{*}{$0.833$}\\
			& $\lambda_2$ & $3.048$ & $3.002$ & $0.645$ & $0.271$ \\
			& $\lambda_3$ & $3.953$ & $3.998$ & $0.651$ & $0.292$\\
			& $\rho$ & $0.824$ & $0.832$ & $0.043$ & $0.018$\\
			\bottomrule[\heavyrulewidth] 
			\multirow{5}{*}{$500$} & $\lambda_1$ & $1.000$ & $1.000$ & $0.044$& $0.044$ & \multirow{5}{*}{$0.834$}\\
			& $\lambda_2$ & $3.009$ & $3.002$ & $0.288$ & $0.121$\\
			& $\lambda_3$ & $3.990$ & $3.998$ & $0.290$ & $0.128$\\
			& $\rho$ & $0.832$ & $0.834$ & $0.018$ & $0.008$\\
			\bottomrule[\heavyrulewidth]     	
			\multirow{5}{*}{$1000$} & $\lambda_1$ & $1.000$  & $1.000$ & $0.032$ &$0.032$ & \multirow{5}{*}{$0.838$}\\
			& $\lambda_2$ & $3.002$ & $3.000$ & $0.206$ & $0.086$\\
			& $\lambda_3$ & $3.997$ & $3.999$ & $0.208$ & $0.091$\\
			& $\rho$ & $0.833$ & $0.834$ & $0.013$ & $0.005$\\
			\bottomrule[\heavyrulewidth]  	  
			
			\bigskip

		\end{tabular}
	\end{table}

	\subsubsection{A particular data set  I}
	
	We consider a data set which is mentioned in Islam and Chowdhury \cite{ic17}, the source of the data is from  the tenth wave of the Health and Retirement Study (HRS).
	The data represents the number of conditions ever had $(X_1)$ as mentioned by the doctors and utilization of healthcare services (say, hospital, nursing home, doctor and home care) $(X_2)$.
	
	It has been noted that the sample Pearson correlation coefficient for the above data is $0.063$. Primarily, for further analyses, the data has been tested for independence (c.f. Section 6.3.3) and the $-2 \log \Lambda_3$  value is $28.359$. Consequently, the assumption of variables independence is rejected. 
	
	Further, the estimated Fisher index of $X_1$ is $0.801$ (approximately equi-dispersed) and the dispersion index of $X_2$ is $1.03$ (slightly over-dispersed).  Moment and m.l.e's values are displayed in Table \ref{HRS_model1}.  Next,we  consider the sub-models  and fit the same data for these models. Recall that for these sub-models the
	m.l.e's and the moment estimates coincide,
	
	\begin{itemize}
		\item {\bf Sub-Model I:}  That is, when $\lambda_2=\lambda_3$, for fitted values, c.f.  Table \ref{HRS_model2}.
		\item {\bf Sub-Model II:} For $\lambda_2=0$,  the fitted values are displaced in Table \ref{HRS_model3}.
	\end{itemize}

	\begin{table}
		\caption{Health and Retirement Study Data: Full Model}  
		\label{HRS_model1}
		\small % text size of table content
		\centering % center the table
		\begin{tabular}{lccccr} % alignment of each column data
			\toprule[\heavyrulewidth]\toprule[\heavyrulewidth]
			\textbf{n }  & \textbf{Parameter} & \textbf{Moment}  & \textbf{m.l.e} &  \textbf{PC} &  \textbf{-2 log L} \\ 
			\midrule
			\multirow{5}{*}{$5567$} & $\lambda_1$ & $2.643$ & $2.643$ & \multirow{5}{*}{$0.063$} & \multirow{5}{*}{$32766.08$}\\
			& $\lambda_2$ & $0.688$ & $0.64$   \\
			& $\lambda_3$ & $0.031$ & $0.049$ \\
			& $\rho$ & $ 0.057$ & $0.091$ \\
			\bottomrule[\heavyrulewidth] 
		\end{tabular}
		
	\end{table}
	
	\medskip

	\begin{table}
		\caption{Health and Retirement Study Data: Sub-Model I}  
		\label{HRS_model2}
		\small % text size of table content
		\centering % center the table
		\begin{tabular}{lccr} % alignment of each column data
			\toprule[\heavyrulewidth]\toprule[\heavyrulewidth]
			\textbf{n }  & \textbf{Parameter} &  \textbf{Moment}   &   \textbf{-2 log L} \\ 
			\midrule
			\multirow{3}{*}{$5567$} & $\lambda_1$ & $2.643$  & \multirow{3}{*}{$33077.09$}\\
			& $\lambda_3$ & $0.211$  \\
			
			\bottomrule[\heavyrulewidth] 
		\end{tabular}
		\bigskip
		\medskip

		\caption{Health and Retirement Study Data: Sub-Model II}  
		\label{HRS_model3}
		\small % text size of table content
		\centering % center the table
		\begin{tabular}{lccr} % alignment of each column data
			\toprule[\heavyrulewidth]\toprule[\heavyrulewidth]
			\textbf{n }  & \textbf{Parameter} &  \textbf{Moment}   &   \textbf{-2 log L} \\ 
			\midrule
			\multirow{3}{*}{$5567$} & $\lambda_1$ & $2.643$  & \multirow{3}{*}{$43813.17$}\\
			& $\lambda_3$ & $0.291$  \\
			\bottomrule[\heavyrulewidth] 
		\end{tabular}
	\end{table}
	
	Note that using the $AIC$ criteria  Pseudo-Poisson Full-Model fit the data better. 
	
	\subsubsection{A particular data set  II}
	
	Here, we consider a data set which is in Leiter and Hamdan \cite{lh73}, the source of the data is a 50-mile stretch of Interstate 95 in Prince William, Stafford and Spottsylvania counties in easter Virginia. The data represents the number of accident categorized as fatal accidents, injury accidents or property damage accidents, along with the corresponding number of fatalities and injuries for the period 1 January 1969 to 31 October 1970.
	
	We consider the number of fatalities as $X_1$, since the estimated Fisher index is $1.051$ and the number of injury accidents as $X_2$ (estimated Fisher index is $1.141$).  Moment and m.l.e's values are displayed in Table \ref{shuterdata_model1}.  Next,we  consider the sub-models  and fit the same data for these models. Recall that for this sub-models the
	m.l.e's and the moment estimates coincide,
	
	\begin{itemize}
		\item {\bf Sub-Model I:}  That is, when $\lambda_2=\lambda_3$, for fitted values, see  Table \ref{shuterdata_model2}.
		\item {\bf Mirrored  Sub-Model II (c.f. Section 7):} For $\lambda_2=0$,  the fitted values are displaced in Table \ref{shuterdata_model3}. 
	\end{itemize}

	\bigskip
	Note that using the AIC criteria the Pseudo-Poisson Mirrored Sub-Model II fits the data better.

	\begin{table}
		\caption{Accidents and Fatalities Data: Full Model}  
		\label{shuterdata_model1}
		\small % text size of table content
		\centering % center the table
		\begin{tabular}{lccccr} % alignment of each column data
			\toprule[\heavyrulewidth]\toprule[\heavyrulewidth]
			\textbf{n }  & \textbf{Parameter} & \textbf{Moment}  & \textbf{m.l.e} &  \textbf{PC} &  \textbf{-2 log L} \\ 
			\midrule
			\multirow{5}{*}{$639$} & $\lambda_1$ & $0.058$ & $0.058$ & \multirow{5}{*}{$0.205$} & \multirow{5}{*}{$1862.076$}\\
			& $\lambda_2$ & $0.812$ & $0.813$   \\
			& $\lambda_3$ & $0.867$ & $0.843$ \\
			& $\rho$ & $0.219$ & $0.213$ \\
			\bottomrule[\heavyrulewidth] 
		\end{tabular}
		
	\end{table}
	\medskip

	\begin{table}
		\caption{Accidents and Fatalities Data: Sub-Model I}  
		\label{shuterdata_model2}
		\small % text size of table content
		\centering % center the table
		\begin{tabular}{lccr} % alignment of each column data
			\toprule[\heavyrulewidth]\toprule[\heavyrulewidth]
			\textbf{n }  & \textbf{Parameter} &  \textbf{Moment}   &   \textbf{-2 log L} \\ 
			\midrule
			\multirow{3}{*}{$639$} & $\lambda_1$ & $0.058$  & \multirow{3}{*}{$1862.094$}\\
			& $\lambda_3$ & $0.815$  \\
			\bottomrule[\heavyrulewidth] 
		\end{tabular}
		
		\bigskip
		\medskip
		
		\caption{Accidents and Fatalities Data: Mirrored Sub-Model II}  
		\label{shuterdata_model3}
		\small % text size of table content
		\centering % center the table
		\begin{tabular}{lccr} % alignment of each column data
			\toprule[\heavyrulewidth]\toprule[\heavyrulewidth]
			\textbf{n }  & \textbf{Parameter} &  \textbf{Moment}   &   \textbf{-2 log L} \\ 
			\midrule
			\multirow{3}{*}{$639$} & $\lambda_1$ & $0.862$  & \multirow{3}{*}{$1847.505$}\\
			& $\lambda_3$ & $0.067$  \\
			
			\bottomrule[\heavyrulewidth] 
			
		\end{tabular}
	\end{table}

	\section{The mirrored, or permuted model}
	
	When we assume that $X_1\sim \mathscr{P}(\lambda_1)$ and that $X_2|X_1=x_1 \sim\mathscr{P}(\lambda_2(x_1))$, it is natural to think that, in some unspecified way, the variable $X_1$ influences or, dare we say, causes $X_2$. But, for many data sets the ordering of the variables is quite arbitrary and we should also entertain the possibility that the data might be better modeled by the corresponding "mirrored" model in which 
	$X_2\sim \mathscr{P}(\lambda_1)$ and $X_1|X_2=x_2 \sim\mathscr{P}(\lambda_2(x_2))$. The original model and the mirrored model are distinct and, inevitably, one of them will fit the data better than the other (except in the less interesting case in which $X_1$ and $X_2$ are independent).
	
	\bigskip

	With this in mind, we return to the Health and Retirement Study data. The corresponding values
	of the AIC criterion for our original Pseudo-Poisson, Sub-models are displayed in Table \ref{aic}, together with those for the corresponding mirrored models and the Bivariate Conway-Maxwell-Poisson (COM- Poisson) model.  We refer to Sellers et. al.
	\cite{kdb16} for the further discussion on the bivariate COM-Poisson model. Using the Akaike information criterion, the   Bivariate COM-Poisson model appears to be the best  but the computation time required for fitting this model may be a problem.  Also, note that BPP MSM-II is not suitable for the Health and Retirement Study data since the Pseudo-Poisson model is only appropriate when $X_1=0$ (or mirrored  $X_2=0$) implies that $X_2=0$ (or mirrored  $X_1=0$).
	\bigskip
	
	Now, for the Accidents and Fatalities data, note that  the considered data is not suitable for the Mirrored Full model or the Sub-Model II.  We refer to Table  \ref{aic2} for AIC values of other models.  Using the Akaike information criterion, the mirrored Bivariate Pseudo-Poisson Sub-Model II 
	model appears to be the best.  For the Accidents and Fatalities data the models BPP MFM and  BPP SM-II are inappropriate. Also, note that the bivariate  Pseudo-Poisson mirrored Sub-model II is exactly the same model as that considered in Leiter and Hamdan \cite{lh73}.

	\bigskip
	There do exist other over-dispersed models which  include the bivariate COM-Poisson as a special case. However, the number of parameters to fit the data and the computation time for the analysis are less in the Pseudo-Poisson model. For example, the bivariate COM-Poisson model has $6$ parameters and for the above given data size computation is very slow  because of the non-existence of closed-form expressions.    Also, note that for the Pseudo-Poisson Sub-Model I, both $X_1$ and $X_2$ can take any non-negative integer values but such data sets are not plausible for the Sub-Model II or the Poisson-Poisson model in Leiter and Hamdan \cite{lh73} or its mirrored models. Finally, we reiterate our recommendation that the bivariate Pseudo-Poisson model should be used when the given count data has one marginal equi-dispersed and the other over-dispersed.

	\begin{table}
		\caption{Health and Retirement Study Data: AIC}  
		\label{aic}
		\small % text size of table content
		\centering % center the table
		
		\begin{tabular}{lcr} % alignment of each column data
			\toprule[\heavyrulewidth]\toprule[\heavyrulewidth]
			\textbf{Models }  & \textbf{No. Parameters}  & \textbf{AIC}  \\ 
			\midrule
			BPP FM & $3$ & $32772.08$ \\
			BPP MFM & $3$ & $32783.08$ \\
			BPP SM-I  & $2$ & $33081.09$  \\
			BPP MSM-I  & $2$ &  $35640.46$\\
			BPP SM-II & $2$ & $43817.17$ \\
			BPP MSM-II & $2$ &  $----$\\
			BCMP & 6 & $32690.18$\\
			\bottomrule[\heavyrulewidth] 
			
			\bigskip
			
		\end{tabular}
		
		\begin{tablenotes}
			\item {\bf AIC values for Bivariate Pseudo-Poisson Full Model (BPP FM),  Bivariate Pseudo-Poisson Mirrored Full Model (BPP MFM) ,Bivariate Pseudo-Poisson Sub-Model I  (BPP SM-I), Bivariate Pseudo-Poisson Mirrored Sub-Model I (BPP  MSM-I),  Bivariate Pseudo-Poisson Sub-Model II (BPP SM-II), Bivariate Pseudo-Poisson Mirrored Sub-Model II (BPP  MSM-II) and Bivariate COM-Poisson (BCMP) on Health and Retirement Study data.}
		\end{tablenotes}
	\end{table}

	\bigskip

	\begin{table}
		\caption{Accidents and Fatalities Data: AIC}  
		\label{aic2}
		\small % text size of table content
		\centering % center the table
		
		\begin{tabular}{lcr} % alignment of each column data
			\toprule[\heavyrulewidth]\toprule[\heavyrulewidth]
			\textbf{Models }  & \textbf{No. Parameters}  & \textbf{AIC}  \\ 
			\midrule
			BPP FM & $3$ & $1862.076$ \\
			BPP MFM &$3$ & $----$ \\
			BPP SM-I  & $2$ & $1866.094$  \\
			BPP MSM-I  & $2$ &  $1865.560$\\
			BPP SM-II & $2$ & $----$ \\
			BPP MSM -II & $2$ &  $1847.505$\\
			BCMP & 6 & $1854.125$\\
			\bottomrule[\heavyrulewidth] 
			
			\bigskip
			
		\end{tabular}
	\end{table}

	\bigskip
	
	\begin{note} Having analyzed a bivariate data set using a Pseudo-Poisson model, the analysis of the corresponding mirrored model can be implemented by repeating the analysis with the roles of the $X_{1i}$'s and $X_{2i}$'s interchanged.
	\end{note}	
	
	\section{Permutations of $k$-variate models}
	
	In the discussion of $k$-variate pseudo-models in Sections $1$ and $2$, in reality it will be better to consider $k!$ related models obtained by permuting the roles of the $k$ variables in the data set to be fitted. In such a situation it will usually be the case that one and only one of the $k!$ models will turn out to provide the best fit to the data.

	\bigskip

\end{document}